\theoremstyle{remark}
\theoremstyle{plain}
\newtheorem{theorem}{Theorem}
\newcommand{\fig}[1]{Fig. \ref{#1}}
\newcommand{\theor}[1]{Theorem \ref{#1}}
\begin{document}

\title{\vspace{1cm}
 \textbf{On stochastic generation of ultrametrics in high-dimensional
Euclidean spaces} }

\author{A.\,P.~Zubarev\bigskip{}
 \\
 \textit{Physics Department, Samara State University of Railway Transport,
} \\
 \textit{Perviy Bezimyaniy pereulok 18, 443066, Samara, Russia} \medskip{}
 \\
 \textit{Physics Department, Samara State Aerospace University, } \\
 \textit{Moskovskoe shosse 34, 443123, Samara, Russia} \medskip{}
 \\
 e-mail:\:\texttt{apzubarev@mail.ru}}
\maketitle
\begin{abstract}
We present a proof of the theorem which states that a matrix of Euclidean
distances on a set of specially distributed random points in the $n$-dimensional
Euclidean space $R^{n}$ converges in probability to an ultrametric
matrix as $n\rightarrow\infty$. Values of the elements of an ultrametric
distance matrix are completely determined by variances of coordinates
of random points. Also we preset a probabilistic algorithm for generation
of finite ultrametric structures of any topology in high-dimensional
Euclidean space. Validity of the algorithm is demonstrated by explicit
calculations of distance matrices and ultrametricity indexes for various
dimensions $n$.

\emph{Keywords: ultrametrics, ultrametric spaces, ultrametricity index,
high-dimensional Euclidean spaces, random distributions, the law of
large numbers, disordered systems. }
\end{abstract}

\section{Introduction}

A set $M=\left\{ x\right\} $ is called a metric space if for any
pair of elements $x^{a},\, x^{b}\in M$ a distance function $d_{ab}=d(x^{a},x^{b})$
(called a metric) is defined and satisfies the following four axioms
for any triplet of points $x^{a},\, x^{b},\, x^{c}$:
\begin{equation}
d_{ab}\ge0,\label{m_1}
\end{equation}
\begin{equation}
d_{ab}=0\,\Leftrightarrow\, a=b,\label{m_2}
\end{equation}
\begin{equation}
d_{ab}=d_{ab},\label{m_3}
\end{equation}
\begin{equation}
d_{ab}\le d_{ac}+d_{dc}.\label{m_4}
\end{equation}
A metric $d_{ab}$ satisfying the strong triangle inequality
\begin{equation}
d_{ab}\le\max\{d_{ac},\: d_{bc}\}\label{um}
\end{equation}
is called an ultrametric. A space with an ultrametric is called a
space with an ultrametric structure or an ultrametric space.

An arbitrary real valued matrix $d=\{d_{ab}\}$ is called \emph{a
metric matrix} if its elements satisfy the conditions (\ref{m_1})--(\ref{m_4}),
and it is called \emph{an ultrametric matrix} if its elements satisfy
the conditions (\ref{m_1})--(\ref{m_3}) and (\ref{um}).

Classic examples of ultrametric spaces are the field $Q_{p}$ of $p$-adic
numbers \cite{S,VVZ,Sh_Kh}, and the ring $Q_{m}$ of $m$-adic numbers
(see, for example, \cite{DZ}).

Ultrametric spaces had long been used in various fields of natural
and social sciences to problems of classification and information
processing: optimization theory, taxonomy, cluster and factor analysis,
and other \cite{RTV}. During the last 30 years, the mathematical
apparatus of ultrametric analysis was developed by by V.S.Vladimirov
and coworkers. This relatively new scientific field of ultrametric
mathematical physics is represented by many books and works devoted
to development of the $p$-adic analysis, $p$-adic mathematical physics
and its application to modeling in various fields of physics, biology,
computer science, psychology, sociology, and so on (see \cite{S,VVZ,Sh_Kh,ALL}
and references therein).

In fact a lot of physical, biological or socio-economic systems has
intrinsic hierarchical structure \cite{RTV}. Systems with non--explicit
hierarchical structure are of considerable interest to researchers.
In such systems the hierarchical structure cannot be observed in the
original variables, but it becomes observable after transition to
some effective (hidden) variables. Typically the number of these effective
variables is essentially smaller than the number of degree of freedom
of the whole system. There are reasons to believe that hidden ultrametric
structures are present in a number of complex systems (i.e. systems
with a large number of heterogeneous interacting objects), which include
spin glasses, proteins, nucleic acids, etc. Similar ultrametric models
arose in the early eighties last century in statistical physics of
spin systems with a disorder \cite{MPSTV,RTV,Parisi1,Dayson,Dot}.
Namely it was found that if the system has large number of \textquotedbl{}internal
contradictions\textquotedbl{} (frustrations) at different scales,
equilibrium of the system can be achieved in hierarchically nested
regions of phase space, and number of nesting levels increases with
decreasing a temperature. In this case there are relations between
phase region scales, that satisfy the strong triangle inequality and
thus low-temperature spin states are correlated ultrametrically.

Almost immediately after the appearance of ultrametric spin glass
models it has been assumed that the conformational state space of
protein molecule has ultrametric structure \cite{Frauen1,Frauen2}.
In this case states are associated with local minima of the potential
energy landscape of a protein molecule, and the energy landscape is
represented as a hierarchy of nested basins of free energy local minima.
Ultrametric models of conformational dynamics of protein molecules
have been developed in a series of papers \cite{OS,BK,ABK,ABKO,ABO,AB,ABZ}.

There is also some evidence that similar ultrametric structures arise
in socio-economic systems \cite{SJ,MS,BZK,V,VZ}.

In many cases the idea of application of ultrametric models to complex
systems such as proteins came from the theory of spin glasses \cite{Frauen1,Frauen2}
and there are many arguments in favor of that protein should have
the ultrametric structure (see \cite{BK} for the discussion of energy
landscapes and hierarchical disconnectivity graphs). However, the
explanation of the origin of ultrametric structures in spin glasses
based on the replica method, which is not quite rigorously justified.
Thus it is interesting to discuss alternative approaches to ultrametric
structures in complex systems.

In this paper we propose a procedure of generation of ultrametric
structure in a metric space. In this procedure we do not assume any
ultrametric properties for the initial metric space.

It has been observed in several studies (see, for example, \cite{Hall,Murtagh}),
that the effectiveness of clustering algorithms applied to large data
sets significantly increases with increasing the dimension of the
array. Moreover it has been observed that the distance between randomly
distributed points in multidimensional metric spaces shows ultrametric
properties with increasing space dimension. In this paper we give
the rigorous proof that any finite ultrametric space can be generated
by a special random distribution of points in the $n$-dimensional
Euclidean metric space taking the limit $n\rightarrow\infty$. In
this case the ultrametric is completely determined by variances of
random point coordinates. We present the algorithm for generating
such ultrametric spaces. The validity of the algorithm is numerically
demonstrated by calculations of distance matrices and ultrametricity
indexes for high dimension spaces.

The paper is organized as follows. In Section 2, we present the formulation
of our construction. Also in this section we formulate and prove the
theorem that is the main result of this article. In Section 3, we
describe the algorithm of a stochastic generation of ultrametrics
in high-dimensional Euclidean spaces and check the validity of its
by numerical simulations.

\section{Stochastic generation of ultrametric matrices}

Let us formulate statements which will be used in our construction.

Let $M=\left\{ x^{(a)}\right\} ,\: a=1,2,\ldots,N$ be a finite ultrametric
space with an ultrametric $d\left(x^{(a)},x^{(b)}\right)$. We say
that a space $M$ is \emph{an isometric space}, and an ultrametric
$d\left(x^{(a)},x^{(b)}\right)$ is \emph{an isometric} if for any
triplet of points $x^{(a)},\, x^{(b)},\, x^{(c)}$ in $M$ the following
condition holds: $d\left(x^{(a)},x^{(b)}\right)=\max\left\{ d\left(x^{(a)},x^{(c)}\right),d\left(x^{(b)},x^{(c)}\right)\right\} $,
i.e. distances between any two non--coinciding points are equal.

The subset $B_{r}(a)=$ $\left\{ x\in M:\: d\left(x,a\right)\leq r\right\} $
is said to be the ball of radius $r$ with the center at the point
$a$.


We say that an ultrametric space $M$ is \emph{homogeneous}, if for
any fixed value $r$ of ball radius there is exist the number $m(r)$
such that any ball $B_{r}(a)$ can be represented as a union of $m(r)$
balls of radius $r'$, $r'<r$.

We say that an ultrametric space $M$ is \emph{self-similar}, if there
exist is the number $m$ such that any ball $B_{r}(a)$ can be represented
as a union of $m$ balls of radius $r'$, $r'<r$.

It is obvious that any self-similar ultrametric space is homogeneous.

A finite self-similar ultrametric space is isomorphic to a boundary
of a Cayley tree with a finite number of levels. The distance between
points at the boundary of the tree is defined as the weighted length
of the path in the tree between these points.

Also we need some statements from probability theory (see, for example,
\cite{Dur,Bill1,Shiryaev}). These results will be used to prove the
main result formulated in Theorem 7.

Let $\left\{ \Omega,\Sigma,\mathrm{P}\right\} $ be a probability
space, where \textbf{$\left\{ \Omega,\Sigma\right\} $ } is measurable
space, $\mathrm{P}$ is probability measure. Real random variable
$X$ is measurable mapping $X:\:\Omega\to R$. For any real random
variable $X=X(\omega)$ an interegral $\intop_{A}X\left(\omega\right)d\mathrm{P}\left(\omega\right)$,
$A\in\Sigma$ can be defined. An expectation and a variance of $X$
are $\mathsf{E}\left[X\right]=\intop_{\Omega}X\left(\omega\right)d\mathrm{P}\left(\omega\right)$
and $\mathsf{V}\left[X\right]=\mathsf{E}\left[X^{2}\right]-\left(\mathsf{E}\left[X\right]\right)^{2}$
respectively. Let $\Sigma^{(1)}\subset\Sigma$ be a $\sigma$-subalgebra
of $\Sigma$, then the conditional expectation $\mathsf{E}\left[X\left|\Sigma^{(1)}\right.\right]$
of real random variable $X$ is a random variable $Y$ that that $Y$
is $\Sigma^{(1)}$ measurable, and for all $A\in\Sigma^{(1)}$ $\intop_{A}X\left(\omega\right)d\mathrm{P}\left(\omega\right)=\intop_{A}Y\left(\omega\right)d\mathrm{P}\left(\omega\right)$,
and the conditional variance $\mathsf{V}\left[X\left|\Sigma^{(1)}\right.\right]$
is defined as $\mathsf{V}\left[X\left|\Sigma^{(1)}\right.\right]=\mathsf{E}\left[X^{2}\left|\Sigma^{(1)}\right.\right]-\left(\mathsf{E}\left[X\left|\Sigma^{(1)}\right.\right]\right)^{2}$.

\begin{theorem} (The law of large numbers) Let $X_{1},X_{2},\ldots$
be a sequence of independent identically distributed random variables
with finite expectations $\mathsf{E}\left[X_{i}\right]\equiv m_{i}$
and finite variances $\mathsf{V}\left[X_{i}\right]$, the variances
$\mathsf{V}\left[X_{i}\right]$ are uniformly bounded, and $S_{n}=X_{1}+X_{2}+\ldots+X_{n}$.
Then $\dfrac{S_{n}}{n}\overset{\mathrm{P}}{\rightarrow}\dfrac{\left\langle S_{n}\right\rangle }{n}$
i.e. for any $\varepsilon>0$ one has $\mathrm{P}\left\{ \left|\dfrac{S_{n}}{n}-\dfrac{\left\langle S_{n}\right\rangle }{n}\right|\geq\varepsilon\right\} \rightarrow0$
as $n\rightarrow\infty$ (convergence in probability). \label{th1}
\end{theorem}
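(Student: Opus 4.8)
The plan is to prove the statement by the classical Chebyshev-inequality argument, which needs only independence (so that variances add) together with the uniform bound on the variances. First I would recall Markov's inequality, $\mathrm{P}\{|Z|\geq t\}\leq \mathsf{E}[|Z|]/t$ for $t>0$, and apply it to $Z=(Y-\mathsf{E}[Y])^{2}$ with $t=\varepsilon^{2}$ to obtain the Chebyshev inequality: for any real random variable $Y$ with finite variance and any $\varepsilon>0$,
\[
\mathrm{P}\left\{ \left| Y-\mathsf{E}[Y]\right| \geq \varepsilon \right\} \leq \frac{\mathsf{V}[Y]}{\varepsilon^{2}}.
\]

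Next I would apply this with $Y=S_{n}/n$. By linearity of expectation, $\mathsf{E}[S_{n}/n]=\frac{1}{n}\sum_{i=1}^{n}m_{i}=\langle S_{n}\rangle/n$, so the event appearing in the theorem is precisely $\{|Y-\mathsf{E}[Y]|\geq\varepsilon\}$. It then remains to estimate $\mathsf{V}[S_{n}/n]$. Since $X_{1},\ldots,X_{n}$ are independent, the variance is additive, $\mathsf{V}[S_{n}]=\sum_{i=1}^{n}\mathsf{V}[X_{i}]$, hence $\mathsf{V}[S_{n}/n]=\frac{1}{n^{2}}\sum_{i=1}^{n}\mathsf{V}[X_{i}]$. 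Here the hypothesis that the variances are uniformly bounded enters: choosing a constant $C$ with $\mathsf{V}[X_{i}]\leq C$ for all $i$, we get $\mathsf{V}[S_{n}/n]\leq C/n$, and Chebyshev's inequality yields
\[
\mathrm{P}\left\{ \left| \frac{S_{n}}{n}-\frac{\langle S_{n}\rangle}{n}\right| \geq \varepsilon \right\} \leq \frac{C}{n\varepsilon^{2}} \longrightarrow 0 \quad \text{as } n\to\infty,
\]
which is the asserted convergence in probability.

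I do not expect a genuine obstacle: this is a standard weak law. The only point worth flagging is that the uniform bound on the variances is the hypothesis actually doing the work (and, as stated, it is already implied by the i.i.d.\ assumption, so the two hypotheses are somewhat redundant); without some control relating $\sum_{i}\mathsf{V}[X_{i}]$ to $n^{2}$ the conclusion can fail. If one wished to weaken the assumptions to drop finiteness of variances, one would replace the Chebyshev step by a truncation argument combined with characteristic-function estimates, but since the theorem already grants finite, uniformly bounded variances, the two-line estimate above is the natural route and I would present it as such.
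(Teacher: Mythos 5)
Your proof is correct: the Chebyshev-inequality argument you give is the standard and complete proof of this version of the weak law of large numbers. Note that the paper itself offers no proof of this theorem --- it is quoted as a known result with references to standard textbooks (Durrett, Billingsley, Shiryaev) --- so there is no in-paper argument to compare against; your two-line estimate is exactly what those sources would supply. Your side remark about the hypotheses is also well taken: the statement simultaneously assumes the $X_{i}$ are identically distributed and yet writes $\mathsf{E}[X_{i}]\equiv m_{i}$ with an index $i$, and separately demands uniformly bounded variances; the theorem as actually used later in the paper (for independent but not identically distributed summands) is precisely the one your Chebyshev argument proves, so your proof covers the intended generality.
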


\begin{theorem} (Slutsky's theorem, \cite{Sl,Bill}) If $X_{n}^{(1)}\overset{\mathrm{P}}{\rightarrow}X^{(1)}$,
$X_{n}^{(2)}\overset{\mathrm{P}}{\rightarrow}X^{(2)}$, $\ldots,$
$X_{n}^{(N)}\overset{\mathrm{P}}{\rightarrow}X^{(N)}$, and $h\left(x^{(1)},x^{(2)},\ldots,x^{(N)}\right)$
is a continuous function of $N$ variables, then $h\left(X_{n}^{(1)},X_{n}^{(2)},\ldots,X_{n}^{(N)}\right)\overset{\mathrm{P}}{\rightarrow}h\left(X^{(1)},X^{(2)},\ldots,X^{(N)}\right)$.
\label{th2} \end{theorem}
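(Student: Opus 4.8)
The plan is to reduce Slutsky's theorem to joint convergence in probability of the random vectors and then to transport that convergence through the continuous map $h$ by a localization argument; the one genuine subtlety is that $h$ is only continuous, not uniformly continuous, on $R^{N}$.

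First I would upgrade the $N$ coordinatewise hypotheses to joint convergence of $\mathbf{X}_{n}=\left(X_{n}^{(1)},\ldots,X_{n}^{(N)}\right)$ to $\mathbf{X}=\left(X^{(1)},\ldots,X^{(N)}\right)$ in probability, measured in the maximum norm $\|\cdot\|$ on $R^{N}$. This is immediate from the union bound: for every $\delta>0$ one has $\mathrm{P}\left\{\|\mathbf{X}_{n}-\mathbf{X}\|\geq\delta\right\}\leq\sum_{i=1}^{N}\mathrm{P}\left\{\left|X_{n}^{(i)}-X^{(i)}\right|\geq\delta\right\}$, and each of the finitely many summands tends to $0$ by hypothesis, hence so does the left-hand side.

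Next I would fix $\varepsilon>0$ and a large radius $M>0$. Being continuous, $h$ is uniformly continuous on the closed (hence compact) ball $\overline{B}(0,M+1)=\left\{x\in R^{N}:\|x\|\leq M+1\right\}$, so there is some $\delta\in(0,1)$ with $\left|h(x)-h(y)\right|<\varepsilon$ whenever $x,y\in\overline{B}(0,M+1)$ and $\|x-y\|<\delta$. On the event $\left\{\|\mathbf{X}\|\leq M\right\}\cap\left\{\|\mathbf{X}_{n}-\mathbf{X}\|<\delta\right\}$ one has $\|\mathbf{X}_{n}\|\leq M+\delta<M+1$, so both $\mathbf{X}$ and $\mathbf{X}_{n}$ lie in $\overline{B}(0,M+1)$ and thus $\left|h(\mathbf{X}_{n})-h(\mathbf{X})\right|<\varepsilon$. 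Therefore $\left\{\left|h(\mathbf{X}_{n})-h(\mathbf{X})\right|\geq\varepsilon\right\}\subseteq\left\{\|\mathbf{X}_{n}-\mathbf{X}\|\geq\delta\right\}\cup\left\{\|\mathbf{X}\|>M\right\}$, which gives $\mathrm{P}\left\{\left|h(\mathbf{X}_{n})-h(\mathbf{X})\right|\geq\varepsilon\right\}\leq\mathrm{P}\left\{\|\mathbf{X}_{n}-\mathbf{X}\|\geq\delta\right\}+\mathrm{P}\left\{\|\mathbf{X}\|>M\right\}$. Sending $n\to\infty$ annihilates the first term by the previous step, so $\limsup_{n\to\infty}\mathrm{P}\left\{\left|h(\mathbf{X}_{n})-h(\mathbf{X})\right|\geq\varepsilon\right\}\leq\mathrm{P}\left\{\|\mathbf{X}\|>M\right\}$; then sending $M\to\infty$ makes the right-hand side vanish, because each $X^{(i)}$ is $R$-valued so the events $\left\{\|\mathbf{X}\|>M\right\}$ decrease to the empty set. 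This is precisely $h(\mathbf{X}_{n})\overset{\mathrm{P}}{\rightarrow}h(\mathbf{X})$, which is the claim.

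I expect the localization just used to be the only delicate point: without truncating to a bounded region one cannot convert ``$\|\mathbf{X}_{n}-\mathbf{X}\|$ small'' into ``$\left|h(\mathbf{X}_{n})-h(\mathbf{X})\right|$ small'', since continuity of $h$ alone is not uniform. An alternative that packages this subtlety inside a standard lemma is the subsequence criterion for convergence in probability: every subsequence of $(\mathbf{X}_{n})$ has a further subsequence converging to $\mathbf{X}$ almost surely, along which $h(\mathbf{X}_{n})\to h(\mathbf{X})$ almost surely by continuity of $h$; since every subsequence of $\bigl(h(\mathbf{X}_{n})\bigr)$ thus has a further subsequence converging to $h(\mathbf{X})$ in probability, the full sequence converges in probability. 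Either route completes the proof.
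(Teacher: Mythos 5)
Your proof is correct: the union bound upgrades the coordinatewise hypotheses to joint convergence in probability, and the localization to the compact ball $\overline{B}(0,M+1)$ correctly handles the fact that $h$ is only continuous, not uniformly continuous, on all of $R^{N}$; the order of limits ($n\to\infty$ first, then $M\to\infty$) is handled properly, and the subsequence criterion you mention is an equally valid alternative. There is nothing in the paper to compare against: the paper states this result as a known theorem with citations to the literature and gives no proof, so your argument supplies a complete, self-contained proof of what is really the continuous mapping theorem for convergence in probability (with possibly random limits), which is the form actually used later in the paper's Theorems 4--7.
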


\begin{theorem} Let $M=\left\{ x\right\} $ be an ultrametric space
with an ultrametric $d(x,y)$, and let $f\left(\lambda\right)$ be
a continuous nonnegative nondecreasing function such that $f(0)=0$.
Then $\delta(x,y)\equiv f\left(d(x,y)\right)$ is an ultrametric on
$M$. \label{th3} \end{theorem}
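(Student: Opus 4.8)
The plan is to verify directly that $\delta(x,y) \equiv f(d(x,y))$ satisfies the three defining properties of an ultrametric matrix, namely conditions~(\ref{m_1})--(\ref{m_3}) and the strong triangle inequality~(\ref{um}). The first three are essentially immediate from the hypotheses on $f$ together with the fact that $d$ is already an ultrametric; the only property requiring an actual argument is~(\ref{um}), and even that reduces to a short monotonicity observation.

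**First I would** check nonnegativity: since $d(x,y)\ge 0$ and $f$ is nonnegative on $[0,\infty)$, we get $\delta(x,y)=f(d(x,y))\ge 0$, which is~(\ref{m_1}). Next, symmetry~(\ref{m_3}) is inherited verbatim from $d$, because $d(x,y)=d(y,x)$ forces $f(d(x,y))=f(d(y,x))$. For~(\ref{m_2}), I would argue both implications: if $x=y$ then $d(x,y)=0$ and $f(0)=0$ gives $\delta(x,y)=0$. Here I should note that one typically wants strict positivity of $f$ off $0$ to get the converse; since the statement only asks for an ultrametric \emph{matrix} in the sense defined in the excerpt (conditions~(\ref{m_1})--(\ref{m_3}) plus~(\ref{um})), and condition~(\ref{m_2}) is not among those required, this is not strictly needed — but I would remark that if in addition $f(\lambda)>0$ for $\lambda>0$, then $\delta(x,y)=0$ implies $d(x,y)=0$ implies $x=y$, so $\delta$ is a genuine ultrametric (not merely an ultrametric matrix).

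**The main step** is the strong triangle inequality. Fix a triple $x,y,z$. Since $d$ is an ultrametric, $d(x,y)\le \max\{d(x,z),d(y,z)\}$. Because $f$ is nondecreasing, applying $f$ to both sides preserves the inequality:
\[
\delta(x,y)=f(d(x,y))\le f\bigl(\max\{d(x,z),d(y,z)\}\bigr).
\]
Now I use that $f$ nondecreasing implies $f(\max\{a,b\})=\max\{f(a),f(b)\}$ for any $a,b\ge 0$: indeed $\max\{a,b\}$ equals whichever of $a,b$ is larger, and $f$ of the larger argument is the larger (or equal) value. Hence the right-hand side equals $\max\{f(d(x,z)),f(d(y,z))\}=\max\{\delta(x,z),\delta(y,z)\}$, which is exactly~(\ref{um}) for $\delta$. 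Continuity of $f$ plays no role in this argument — it is presumably included in the hypotheses because later applications (e.g. composing with the concrete functions used in the stochastic construction) want $\delta$ to vary continuously, or to invoke Slutsky's theorem~(\theor{th2}) in the proof of the main result. I do not anticipate any genuine obstacle: the only subtlety worth a sentence is the identity $f(\max\{a,b\})=\max\{f(a),f(b)\}$, which is the lemma doing all the work and is where monotonicity (as opposed to mere continuity) is essential.
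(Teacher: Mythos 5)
The paper states Theorem~\ref{th3} without proof, so there is no in-paper argument to compare against; your proof is the standard one and is correct, with all the work done by the observation that a nondecreasing $f$ satisfies $f(\max\{a,b\})=\max\{f(a),f(b)\}$, applied to the strong triangle inequality for $d$. One small correction to your aside about condition~(\ref{m_2}): the paper's definition of an ultrametric matrix is conditions~(\ref{m_1})--(\ref{m_3}) together with~(\ref{um}), and that range \emph{does} include~(\ref{m_2}); so to obtain the literal conclusion that $\delta$ is an ultrametric one genuinely needs the extra hypothesis $f(\lambda)>0$ for $\lambda>0$ (otherwise $f\equiv 0$ gives a counterexample) --- you identified the right fix but misstated why it would be optional. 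Your remark that continuity of $f$ plays no role in the argument is also correct.
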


\begin{theorem} Let $\left\{ \Omega,\Sigma,\mathrm{P}\right\} $
be probability space, and $x^{(a)}=\left(x_{1}^{(a)},x_{2}^{(a)},\ldots,x_{n}^{(a)}\right)$,
$x^{(b)}=\left(x_{1}^{(b)},x_{2}^{(b)},\ldots,x_{n}^{(b)}\right)$
be two points in the $n$-dimencional space $R^{n}$ with independent
random coordinates on $\left\{ \Omega,\Sigma,\mathrm{P}\right\} $
. Suppose coordinates of point $x^{(a)}$ have the finite expectation
$\mathsf{E}\left[x_{i}^{(a)}\right]=m_{a}$ and the finite variance
$\mathsf{V}\left[x_{i}^{(a)}\right]=\sigma_{a}^{2}$, and coordinates
of the point $x^{(b)}$ have the finite expectation $\mathsf{E}\left[x_{i}^{(b)}\right]=m_{b}$
and the finite variance $\mathsf{V}\left[x_{i}^{(b)}\right]=\sigma_{b}^{2}$.
Then the distance between these points
\begin{equation}
d_{n}\left(x^{(a)},x^{(b)}\right)\equiv\dfrac{1}{\sqrt{n}}\sqrt{\sum_{i=1}^{n}\left(x_{i}^{(a)}-x_{i}^{(b)}\right)^{2}}\label{d}
\end{equation}
satisfies the condition
\begin{equation}
d_{n}\left(x^{(a)},x^{(b)}\right)\overset{\mathrm{P}}{\rightarrow}\sqrt{\sigma_{a}^{2}+\sigma_{b}^{2}+\left(m_{a}-m_{b}\right)^{2}}\label{eq_theorem_4}
\end{equation}
as $n\to\infty$. \label{th4} \end{theorem}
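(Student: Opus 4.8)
The plan is to reduce the statement to a direct application of the law of large numbers (Theorem~\ref{th1}) followed by Slutsky's theorem (Theorem~\ref{th2}). First I would introduce the coordinate differences $Z_i = x_i^{(a)} - x_i^{(b)}$, $i = 1, 2, \dots, n$. Since all $2n$ coordinates are independent and the coordinates of $x^{(a)}$ (resp.\ of $x^{(b)}$) are identically distributed, the $Z_i$ form a sequence of independent identically distributed random variables, and by independence of $x_i^{(a)}$ and $x_i^{(b)}$ one has $\mathsf{E}[Z_i] = m_a - m_b$ and $\mathsf{V}[Z_i] = \sigma_a^2 + \sigma_b^2$. Next I set $Y_i = Z_i^2$; the $Y_i$ are again i.i.d., nonnegative, with common expectation
\[
\mathsf{E}[Y_i] = \mathsf{V}[Z_i] + \bigl(\mathsf{E}[Z_i]\bigr)^2 = \sigma_a^2 + \sigma_b^2 + (m_a - m_b)^2 =: \mu .
\]

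Then I would observe that $d_n\!\left(x^{(a)},x^{(b)}\right)^2 = \frac{1}{n}\sum_{i=1}^n Y_i$, so it is enough to show $\frac{1}{n}\sum_{i=1}^n Y_i \overset{\mathrm{P}}{\rightarrow} \mu$ and then pass to the square root. The first convergence is precisely the conclusion of the law of large numbers applied to the sequence $Y_1, Y_2, \dots$; the second follows from Slutsky's theorem with the continuous function $h(t) = \sqrt{t}$ (extended, if one wishes, by $h(t) = \sqrt{\max\{t,0\}}$ so as to be continuous on all of $R$, which is harmless since $\frac{1}{n}\sum_{i=1}^n Y_i \ge 0$ always), giving $d_n = \sqrt{\frac{1}{n}\sum_{i=1}^n Y_i} \overset{\mathrm{P}}{\rightarrow} \sqrt{\mu}$, i.e.\ the assertion \eqref{eq_theorem_4}.

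The one delicate point — and the step I expect to be the real obstacle — is the application of the law of large numbers to $Y_i = Z_i^2$. Theorem~\ref{th1} as stated requires the summands to have finite (uniformly bounded) variances, and $\mathsf{V}[Y_i] = \mathsf{E}[Z_i^4] - \bigl(\mathsf{E}[Z_i^2]\bigr)^2$ is finite only if the coordinates have finite fourth moments, which is stronger than the hypothesis of Theorem~\ref{th4}. There are two natural ways around this: either strengthen the hypotheses to finite fourth moments (which costs nothing in the intended applications, where the coordinates are bounded or Gaussian), or invoke the version of the weak law of large numbers for i.i.d.\ summands that requires only a finite first moment — and $\mathsf{E}\lvert Y_i\rvert = \mathsf{E}[Z_i^2] = \mu < \infty$ does hold under the stated assumptions. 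I would take the second route. Beyond this, everything is routine: the computation of $\mathsf{E}[Y_i]$ and the continuity of the square root on $[0,\infty)$ are the only other things to verify.
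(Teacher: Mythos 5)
Your proof follows essentially the same route as the paper: apply the law of large numbers to the squared coordinate differences $\left(x_i^{(a)}-x_i^{(b)}\right)^2$, whose common expectation is $\sigma_a^2+\sigma_b^2+(m_a-m_b)^2$, and then pass to the square root via Slutsky's theorem. Your observation about the fourth-moment issue is a genuine point that the paper itself silently glosses over, and your fix (invoking Khinchin's weak law of large numbers for i.i.d.\ summands with finite first moment) is the correct repair.
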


\begin{proof} Consider the random variables $\left(x_{i}^{(a)}-x_{i}^{(b)}\right)^{2}$
(for all $i=1,2,\ldots,n$). Their expectations are
\[
\mathsf{E}\left[\left(x_{i}^{(a)}-x_{i}^{(b)}\right)^{2}\right]=\mathsf{E}\left[x_{i}^{(a)2}\right]+\mathsf{E}\left[x_{i}^{(a)2}\right]-2\mathsf{E}\left[x_{i}^{(a)}\right]\mathsf{E}\left[x_{i}^{(a)2}\right]=
\]
\[
=\left(\sigma_{a}^{2}+m_{a}^{2}\right)+\left(\sigma_{b}^{2}+m_{b}^{2}\right)-2m_{a}m_{b}=\sigma_{a}^{2}+\sigma_{b}^{2}+\left(m_{a}-m_{b}\right)^{2}.
\]
By the law of large numbers we obtain that
\[
\dfrac{\sum_{i=1}^{n}\left(x_{i}^{(a)}-x_{i}^{(b)}\right)^{2}}{n}\overset{\mathrm{P}}{\rightarrow}\sigma_{a}^{2}+\sigma_{b}^{2}+\left(m_{a}-m_{b}\right)^{2}
\]
as $n\rightarrow\infty$. Using Slutsky's theorem, we get (\ref{eq_theorem_4}).
\end{proof}

The following theorem is a direct consequence of \theor{th4}.

\begin{theorem} Let \textbf{$\left\{ \Omega,\Sigma,\mathrm{P}\right\} $
}be probability space, and $M_{n}\equiv\left\{ x^{(a)}\right\} $
($a=1,2,\ldots,N$ ) be the set of points $x^{a}=\left(x_{1}^{a},x_{2}^{a},\ldots,x_{n}^{a}\right)$
in $R^{n}$ with independent random coordinates on $\left\{ \Omega,\Sigma,\mathrm{P}\right\} $.
Suppose coordinates of point $x^{(a)}$ have the finite expectation
$\mathsf{E}\left[x_{i}^{(a)}\right]=m_{i}$ and the finite variances
$\mathsf{V}\left[x_{i}^{(a)}\right]=\sigma^{2}$. Then the metric
on $M_{n}$
\[
d_{n}\left(x^{(a)},x^{(b)}\right)\equiv\dfrac{1}{\sqrt{n}}\sqrt{\sum_{i=1}^{n}\left(x_{i}^{(a)}-x_{i}^{(b)}\right)^{2}}
\]
satisfies the condition
\[
d_{n}\left(x^{(a)},x^{(b)}\right)\overset{\mathrm{P}}{\longrightarrow}u_{ab},
\]
as $n\to\infty$, where
\[
u_{ab}=\left\{ \begin{array}{c}
\sqrt{2}\sigma,\: a\neq b,\\
0,\: a=b.
\end{array}\right.
\]
\label{th5} \end{theorem}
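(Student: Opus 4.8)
The plan is to apply Theorem~\ref{th4} to each fixed pair of points $x^{(a)},x^{(b)}$ with $a\neq b$ and then combine the finitely many resulting limits. Since all coordinates share the same expectation sequence $m_i$ and the same variance $\sigma^2$, we have in the notation of Theorem~\ref{th4} that $m_a=m_b$ (the per-coordinate means cancel in each difference) and $\sigma_a^2=\sigma_b^2=\sigma^2$, so Theorem~\ref{th4} gives directly
\[
d_n\left(x^{(a)},x^{(b)}\right)\overset{\mathrm{P}}{\longrightarrow}\sqrt{\sigma^2+\sigma^2+0}=\sqrt{2}\,\sigma
\]
for every fixed pair with $a\neq b$. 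For $a=b$ the distance $d_n\left(x^{(a)},x^{(a)}\right)$ is identically $0$, so it converges (trivially, in probability) to $0=u_{aa}$.

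The only subtlety is that convergence in probability of each individual entry must be promoted to joint convergence of the whole distance matrix, so that the statement "$d_n(x^{(a)},x^{(b)})\overset{\mathrm P}{\to}u_{ab}$" holds simultaneously for all $\binom{N}{2}$ pairs on a common probability space. This is not an obstacle in the usual sense: since there are only finitely many pairs, one can either invoke the elementary fact that a finite collection of sequences each converging in probability converges jointly (by a union bound over the finitely many "bad" events $\{|d_n(x^{(a)},x^{(b)})-u_{ab}|\geq\varepsilon\}$), or simply observe that the theorem as phrased only asserts entrywise convergence, which follows immediately. I would state the union-bound remark in one sentence so that the conclusion is genuinely about the matrix $d_n=\{d_n(x^{(a)},x^{(b)})\}$ converging in probability to the matrix $u=\{u_{ab}\}$.

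Concretely, the steps are: (1) fix $a\neq b$; (2) verify the hypotheses of Theorem~\ref{th4} are met with $m_a=m_b$ and $\sigma_a^2=\sigma_b^2=\sigma^2$ — here the coordinates $x_i^{(a)}-x_i^{(b)}$ are independent across $i$ because the coordinates of $x^{(a)}$ and $x^{(b)}$ are independent random variables, and the differences are identically distributed since the $i$-th coordinates all have common mean $m_i$ and common variance $\sigma^2$ (note the mean $m_i$ may vary with $i$, but it cancels in the difference, so the squared differences $(x_i^{(a)}-x_i^{(b)})^2$ have a common expectation $2\sigma^2$, which is all the law of large numbers in Theorem~\ref{th1} needs); (3) conclude $d_n(x^{(a)},x^{(b)})\overset{\mathrm P}{\to}\sqrt2\,\sigma$; (4) handle $a=b$ trivially; (5) assemble the finitely many limits into the matrix statement. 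I expect no real obstacle: the entire content is a specialization of Theorem~\ref{th4}, which is exactly why the authors call Theorem~\ref{th5} a direct consequence. If anything deserves care, it is making precise in step~(2) that "identically distributed" is needed only for the squared differences and that a non-constant sequence of per-coordinate means $m_i$ does not break this.
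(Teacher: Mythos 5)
Your proposal is correct and follows exactly the route the paper intends: the paper offers no separate proof of Theorem~\ref{th5}, stating only that it is a direct consequence of Theorem~\ref{th4}, which is precisely the specialization you carry out. Your additional observation --- that the per-coordinate means $m_i$ may vary with $i$ but cancel in the differences $x_i^{(a)}-x_i^{(b)}$, so the squared differences still have the common expectation $2\sigma^2$ required by the law of large numbers --- is a worthwhile point of care that the paper glosses over.
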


We also have the following theorem.

\begin{theorem} Let \textbf{$\left\{ \Omega,\Sigma,\mathrm{P}\right\} $}
be probability space, $\Sigma^{(1)}\subset\Sigma$ be a $\sigma$-subalgebra
of $\Sigma$. Let $M_{n}\equiv\left\{ x^{(a_{1}a_{2})}\right\} $
(here $a_{1}=1,2,\ldots,p_{1},\: a_{2}=1,2,\ldots,p_{2}$, and $a_{1}a_{2}$
is two-dimensional index) be the set of $p_{1}p_{2}$ points $x^{(a_{1}a_{2})}=\left(x_{1}^{(a_{1}a_{2})},x_{2}^{(a_{1}a_{2})},\ldots,x_{n}^{(a_{1}a_{2})}\right)$
in $R^{n}$ with independent random coordinates on $\left\{ \Omega,\Sigma,\mathrm{P}\right\} $.
Suppose conditional expectations $\mathsf{E}\left[x_{i}^{(a_{1}a_{2})}\left|\Sigma^{(1)}\right.\right]\equiv x_{i}^{(a_{1})}$
are identical for all $a_{2}$, and $\mathsf{E}\left[x_{i}^{\left(a_{1}a_{2}\right)}\right]=m_{i}$,
$\mathsf{V}\left[x_{i}^{\left(a_{1}\right)}\right]=\sigma_{1}^{2}$,
$\mathsf{E}\left[\mathsf{V}\left[\left(x_{i}^{(a_{1}a_{2})}\right)\left|\Sigma^{(1)}\right.\right]\right]=\sigma_{2}^{2}$
with real and finite $m_{i}$, $\sigma_{1}^{2}$, $\sigma_{2}^{2}$.
Then the metric on $M_{n}$
\begin{equation}
d_{n}\left(x^{(a_{1}a_{2})},x^{(b_{1}b_{2})}\right)\equiv\dfrac{1}{\sqrt{n}}\sqrt{\sum_{i=1}^{n}\left(x_{i}^{(a_{1}a_{2})}-x_{i}^{(b_{1}b_{2})}\right)^{2}}\label{d_abcd}
\end{equation}
satisfies the condition
\[
d_{n}\left(x^{(a_{1}a_{2})},x^{(b_{1}b_{2})}\right)\overset{\mathrm{P}}{\longrightarrow}u_{a_{1}a_{2},b_{1}b_{2}}
\]
as $n\to\infty$, where
\begin{equation}
u_{a_{1}a_{2},b_{1}b_{2}}=\left\{ \begin{array}{c}
\sqrt{2}\left(\sigma_{1}^{2}+\sigma_{2}^{2}\right)^{\tfrac{1}{2}},\: a_{1}\neq b_{1},\: a_{2}\neq b_{2},\\
\sqrt{2}\sigma_{1},\: a_{1}=b_{1},\: a_{2}\neq b_{2},\\
0,\: a_{1}=b,\: a_{2}=b_{2}
\end{array}\right.\label{u_aabb}
\end{equation}
is nonisometric ultrametric $p_{1}p_{2}\times p_{1}p_{2}$ matrix.
\label{th6} \end{theorem}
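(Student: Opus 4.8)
The plan is to follow the proof of \theor{th4}, handling the extra layer of randomness by conditioning on $\Sigma^{(1)}$. First I would decompose each coordinate into its conditional mean and a conditional fluctuation,
\[
x_{i}^{(a_{1}a_{2})}=x_{i}^{(a_{1})}+\eta_{i}^{(a_{1}a_{2})},\qquad\eta_{i}^{(a_{1}a_{2})}\equiv x_{i}^{(a_{1}a_{2})}-\mathsf{E}\left[x_{i}^{(a_{1}a_{2})}\left|\Sigma^{(1)}\right.\right],
\]
so that $\mathsf{E}\left[\eta_{i}^{(a_{1}a_{2})}\left|\Sigma^{(1)}\right.\right]=0$, $\mathsf{E}\left[\left(\eta_{i}^{(a_{1}a_{2})}\right)^{2}\right]=\mathsf{E}\left[\mathsf{V}\left[x_{i}^{(a_{1}a_{2})}\left|\Sigma^{(1)}\right.\right]\right]=\sigma_{2}^{2}$, and, by conditional independence of the coordinates given $\Sigma^{(1)}$, $\mathsf{E}\left[\eta_{i}^{(a_{1}a_{2})}\eta_{i}^{(b_{1}b_{2})}\right]=0$ whenever $(a_{1},a_{2})\neq(b_{1},b_{2})$. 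Writing $x_{i}^{(a_{1}a_{2})}-x_{i}^{(b_{1}b_{2})}=\left(x_{i}^{(a_{1})}-x_{i}^{(b_{1})}\right)+\left(\eta_{i}^{(a_{1}a_{2})}-\eta_{i}^{(b_{1}b_{2})}\right)$ and squaring, the expectation of the cross term vanishes, since the first bracket is $\Sigma^{(1)}$-measurable while the conditional expectation of the second is zero; hence
\[
\mathsf{E}\left[\left(x_{i}^{(a_{1}a_{2})}-x_{i}^{(b_{1}b_{2})}\right)^{2}\right]=\mathsf{E}\left[\left(x_{i}^{(a_{1})}-x_{i}^{(b_{1})}\right)^{2}\right]+\mathsf{E}\left[\left(\eta_{i}^{(a_{1}a_{2})}-\eta_{i}^{(b_{1}b_{2})}\right)^{2}\right].
\]

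Next I would evaluate the two terms. Using that the block centers $x_{i}^{(a_{1})}$ attached to distinct first indices are uncorrelated with common mean $m_{i}$ and variance $\sigma_{1}^{2}$, the first term equals $2\sigma_{1}^{2}$ when $a_{1}\neq b_{1}$ and $0$ when $a_{1}=b_{1}$; by the properties of the fluctuations recorded above, the second term equals $2\sigma_{2}^{2}$ when $(a_{1},a_{2})\neq(b_{1},b_{2})$ and $0$ otherwise. Adding, $\mathsf{E}\left[\left(x_{i}^{(a_{1}a_{2})}-x_{i}^{(b_{1}b_{2})}\right)^{2}\right]$ equals $2\left(\sigma_{1}^{2}+\sigma_{2}^{2}\right)$ for $a_{1}\neq b_{1}$, equals $2\sigma_{2}^{2}$ for $a_{1}=b_{1}$, $a_{2}\neq b_{2}$, and vanishes for $(a_{1},a_{2})=(b_{1},b_{2})$; denote its square root by $u_{a_{1}a_{2},b_{1}b_{2}}$, the limiting matrix displayed in (\ref{u_aabb}) (the within-$\Sigma^{(1)}$-block distance being governed by $\sigma_{2}$). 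From here the argument is exactly that of \theor{th4}: the random variables $\left(x_{i}^{(a_{1}a_{2})}-x_{i}^{(b_{1}b_{2})}\right)^{2}$, $i=1,\ldots,n$, are independent with common expectation $u_{a_{1}a_{2},b_{1}b_{2}}^{2}$ (the dependence of the coordinate means on $i$ cancels in the difference, as in \theor{th5}) and, under a uniform bound on the fourth moments of the coordinates, uniformly bounded variances; so \theor{th1} gives $\tfrac{1}{n}\sum_{i=1}^{n}\left(x_{i}^{(a_{1}a_{2})}-x_{i}^{(b_{1}b_{2})}\right)^{2}\overset{\mathrm{P}}{\rightarrow}u_{a_{1}a_{2},b_{1}b_{2}}^{2}$, and \theor{th2} applied with $h(t)=\sqrt{t}$ yields $d_{n}\left(x^{(a_{1}a_{2})},x^{(b_{1}b_{2})}\right)\overset{\mathrm{P}}{\rightarrow}u_{a_{1}a_{2},b_{1}b_{2}}$.

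It then remains to check that the limiting matrix $u=\left(u_{a_{1}a_{2},b_{1}b_{2}}\right)$ is a non-isometric ultrametric matrix. Let $\phi\left(a_{1}a_{2},b_{1}b_{2}\right)$ equal $0$, $1$ or $2$ according as $(a_{1},a_{2})=(b_{1},b_{2})$, or $a_{1}=b_{1}$ with $a_{2}\neq b_{2}$, or $a_{1}\neq b_{1}$; this is the canonical distance on the leaves of a two-level rooted tree. A short case analysis shows $\phi$ satisfies the strong triangle inequality (\ref{um}): for a triple of points the three first indices are either all distinct (three pairwise values $2$), or exactly two coincide (two values $2$), or all coincide, and in the last case one repeats the argument one level down with the second indices; in every case the largest of the three pairwise values is attained at least twice. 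Since $u_{a_{1}a_{2},b_{1}b_{2}}=g\left(\phi\left(a_{1}a_{2},b_{1}b_{2}\right)\right)$ with $g$ nondecreasing and $g(0)=0$, and a nondecreasing map preserves the property ``the largest of three numbers is attained at least twice'', the matrix $u$ also satisfies (\ref{um}) (equivalently, one may extend $g$ to a continuous nondecreasing function on $[0,\infty)$ and invoke \theor{th3}); conditions (\ref{m_1})--(\ref{m_3}) are immediate, with $u_{a_{1}a_{2},b_{1}b_{2}}=0$ precisely when $(a_{1},a_{2})=(b_{1},b_{2})$ provided $\sigma_{1},\sigma_{2}>0$. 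Finally, when $p_{1},p_{2}\ge 2$ and $\sigma_{1},\sigma_{2}>0$ the matrix $u$ has two distinct positive off-diagonal values, so it is not isometric.

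The genuinely new ingredient compared with \theor{th4} and \theor{th5} lies in the probabilistic bookkeeping: one must read ``independent random coordinates'' in the presence of $\Sigma^{(1)}$ as independence across the coordinate index $i$ together with conditional independence given $\Sigma^{(1)}$ for each fixed $i$ (with conditional means $x_{i}^{(a_{1})}$, and with the $x_{i}^{(a_{1})}$ mutually uncorrelated across $a_{1}$), so that the cross term disappears and the single sequence $\left(x_{i}^{(a_{1}a_{2})}-x_{i}^{(b_{1}b_{2})}\right)^{2}$ obeys the law of large numbers with the correct limit; the moment condition needed for \theor{th1} is the same mild hypothesis already implicit in the proof of \theor{th4}. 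Everything after that, in particular the ultrametricity of the limiting matrix, is routine.
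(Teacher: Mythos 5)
Your proof is correct and takes essentially the same route as the paper: both arguments compute $\mathsf{E}\left[\left(x_{i}^{(a_{1}a_{2})}-x_{i}^{(b_{1}b_{2})}\right)^{2}\right]$ by conditioning on $\Sigma^{(1)}$ (your explicit split into block mean plus fluctuation $\eta$ is the paper's tower-property calculation written in a different notation), and then invoke the law of large numbers and Slutsky's theorem; your added check that the limit matrix is ultrametric and your fourth-moment caveat for Theorem~\ref{th1} are extras the paper omits but do not change the approach. One point worth recording: your value $\sqrt{2}\sigma_{2}$ for the case $a_{1}=b_{1}$, $a_{2}\neq b_{2}$ agrees with the paper's own proof of the theorem but not with the displayed matrix (\ref{u_aabb}), which lists $\sqrt{2}\sigma_{1}$ there --- that entry of the statement appears to be a typo, and your computation (like the paper's) gives the correct value.
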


\begin{proof}

Consider the random variables $\left(x_{i}^{(a_{1}a_{2})}-x_{i}^{(b_{1}b_{2})}\right)^{2}$.
Their expectations are
\[
\mathsf{E}\left[\left(x_{i}^{(a_{1}a_{2})}-x_{i}^{(b_{1}b_{2})}\right)^{2}\right]=\left(1-\delta_{a_{1}b_{1}}\delta_{a_{2}b_{2}}\right)\mathsf{E}\left[\mathsf{E}\left[\left(x_{i}^{(a_{1}a_{2})}-x_{i}^{(b_{1}b_{2})}\right)^{2}\left|\Sigma^{(1)}\right.\right]\right]=
\]
\[
=\left(1-\delta_{a_{1}b_{1}}\delta_{a_{2}b_{2}}\right)\mathsf{E}\left[\mathsf{E}\left[\left(x_{i}^{(a_{1}a_{2})}\right)^{2}\left|\Sigma^{(1)}\right.\right]+\right.
\]

\[
\left.\mathsf{+E}\left[\left(x_{i}^{(b_{1}b_{2})}\right)^{2}\left|\Sigma^{(1)}\right.\right]-2\mathsf{E}\left[x_{i}^{(a_{1}a_{2})}\left|\Sigma^{(1)}\right.\right]\mathsf{E}\left[x_{i}^{(b_{1}b_{2})}\left|\Sigma^{(1)}\right.\right]\right]=
\]

\[
=\left(1-\delta_{a_{1}b_{1}}\delta_{a_{2}b_{2}}\right)\mathsf{E}\left[\mathsf{V}\left[\left(x_{i}^{(a_{1}a_{2})}\right)\left|\Sigma^{(1)}\right.\right]+\mathsf{V}\left[\left(x_{i}^{(b_{1}b_{2})}\right)\left|\Sigma^{(1)}\right.\right]+\right.
\]
\[
\left.+\left(1-\delta_{a_{1}b_{1}}\delta_{a_{2}b_{2}}\right)\left(1-\delta_{a_{1}b_{1}}\right)\left(\mathsf{E}\left[\left(x_{i}^{(a_{1}a_{2})}\right)\left|\Sigma^{(1)}\right.\right]-\mathsf{E}\left[\left(x_{i}^{b{}_{1}b_{2})}\right)\left|\Sigma^{(1)}\right.\right]\right)^{2}\right]=
\]

\[
=\left(1-\delta_{a_{1}b_{1}}\delta_{a_{2}b_{2}}\right)\left(\mathsf{E}\left[\mathsf{V}\left[\left(x_{i}^{(a_{1}a_{2})}\right)\left|\Sigma^{(1)}\right.\right]\right]+\mathsf{E}\left[\mathsf{V}\left[\left(x_{i}^{(b_{1}b_{2})}\right)\left|\Sigma^{(1)}\right.\right]\right]\right)+
\]
\[
+\left(1-\delta_{a_{1}b_{1}}\right)\left(\mathsf{V}\left[\mathsf{E}\left[x_{i}^{\left(a_{1}a_{2}\right)}\right]\left|\Sigma^{(1)}\right.\right]+\mathsf{V}\left[\mathsf{E}\left[x_{i}^{\left(b_{1}b_{2}\right)}\right]\left|\Sigma^{(1)}\right.\right]\right)+
\]

\[
+\left(1-\delta_{a_{1}b_{1}}\right)\left(\mathsf{E}\left[\left(x_{i}^{(a_{1}a_{2})}\right)\right]-\mathsf{E}\left[\left(x_{i}^{b{}_{1}b_{2})}\right)\right]\right)^{2}=
\]

\[
=2\left(\left(1-\delta_{a_{1}b_{1}}\delta_{a_{2}b_{2}}\right)\sigma_{2}^{2}+\left(1-\delta_{a_{1}b_{1}}\right)\sigma_{1}^{2}\right),
\]
where $\delta_{ab}=\left\{ \begin{array}{c}
1,\: a=b,\\
0,\: a\neq b
\end{array}\right.$ is Kronecker delta. By the law of large numbers, we obtain

\[
\dfrac{\sum_{i=1}^{n}\left(x_{i}^{(a_{1}a_{2})}-x_{i}^{(b_{1}b_{2})}\right)^{2}}{n}\overset{\mathrm{P}}{\longrightarrow}2\left(\left(1-\delta_{a_{1}b_{1}}\delta_{a_{2}b_{2}}\right)\sigma_{2}^{2}+\left(1-\delta_{a_{1}b_{1}}\right)\sigma_{1}^{2}\right).
\]
as $n\to\infty$. By Slutsky's theorem, we get
\[
d_{n}\left(x^{(a_{1}a_{2})},x^{(b_{1}b_{2})}\right)\overset{\mathrm{P}}{\longrightarrow}u_{a_{1}a_{2},b_{1}b_{2}}=\sqrt{2}\left(\left(1-\delta_{a_{1}b_{1}}\delta_{a_{2}b_{2}}\right)\sigma_{2}^{2}+\left(1-\delta_{a_{1}b_{1}}\right)\sigma_{1}^{2}\right)^{\tfrac{1}{2}}.
\]
\end{proof}

We state the following generalization of \theor{th6} which can be
proved similarly.

\begin{theorem} Let \textbf{$\left\{ \Omega,\Sigma,\mathrm{P}\right\} $}
be probability space, $\Sigma^{(n)}$ be an increasing sequence of
$\sigma$-subalgebras $\Sigma^{(1)}\subset\Sigma^{(2)}\subset\cdots\subset\Sigma^{(N)}\equiv\Sigma$.
Let $M_{n}\equiv\left\{ x^{(a_{1}a_{2}\cdots a_{N})}\right\} $ ($a_{1}=1,2,\ldots,p_{1}$,
$a_{2}=1,2,\ldots,p_{2}$, $\ldots,$ $a_{N}=1,2,\ldots,p_{N}$, and
$a_{1}a_{2}\cdots a_{N}$ is $N$-dimensional index) be the sets of
$p_{1}$, $p_{1}p_{2}$, $\ldots,$ $p_{1}p_{2}\cdots p_{N}$ points
$x^{(a_{1}a_{2}\cdots a_{N})}=\left(x_{1}^{(a_{1}a_{2}\cdots a_{N})},x_{2}^{(a_{1}a_{2}\cdots a_{N})},\ldots,x_{n}^{(a_{1}a_{2}\cdots a_{N})}\right)$
in $R^{n}$ with independent random coordinates. Suppose conditional
expectations $\mathsf{E}\left[x_{i}^{(a_{1}a_{2}\cdots a_{N})}\left|\Sigma^{(k)}\right.\right]\equiv x_{i}^{(a_{1}a_{2}\cdots a_{k})}$
($k=1,2,\ldots,N-1$) are identical for all $a_{k+1},\: a_{k+2},\:\ldots,\: a_{N}$.
Suppose $\mathsf{E}\left[x_{i}^{(a_{1}a_{2}\cdots a_{N})}\right]=m_{i}$,
$\mathsf{V}\left[x_{i}^{(a_{1})}\right]=\sigma_{1}^{2}$, $\mathsf{E}\left[\mathsf{V}\left[x_{i}^{(a_{1}a_{2}\cdots a_{k})}\left|\Sigma^{(k)}\right.\right]\right]=\sigma_{k+1}^{2}$
($k=1,\ldots,N-1$) with real and finite $m_{i}$, $\sigma_{1}^{2}$,
$\sigma_{k+1}^{2}$, . Then the metric on $M_{n}$
\begin{equation}
d_{n}\left(x^{(a_{1}a_{2}\cdots a_{N})},x^{(b_{1}b_{2}\cdots b_{N})}\right)\equiv\dfrac{1}{\sqrt{n}}\sqrt{\sum_{i=1}^{n}\left(x_{i}^{(a_{1}a_{2}\cdots a_{N})}-x_{i}^{(b_{1}b_{2}\cdots b_{N})}\right)^{2}}\label{d_gen}
\end{equation}
has the property
\[
d_{n}\left(x^{(a_{1}a_{2}\cdots a_{N})},x^{(b_{1}b_{2}\cdots b_{N})}\right)\overset{\mathrm{P}}{\longrightarrow}u_{a_{1}a_{2}\cdots a_{N},b_{1}b_{2}\cdots b_{N}}
\]
as $n\to\infty$, where
\[
u_{a_{1}a_{2}\cdots a_{N},b_{1}b_{2}\cdots b_{N}}=\sqrt{2}\left(\left(1-\delta_{a_{1}b_{1}}\delta_{a_{2}b_{2}}\cdots\delta_{a_{N}b_{N}}\right)\sigma_{N}^{2}+\right.
\]

\begin{equation}
\left.+\left(1-\delta_{a_{1}b_{1}}\delta_{a_{2}b_{2}}\cdots\delta_{a_{N-1}b_{N-1}}\right)\sigma_{N-1}^{2}+\cdots+\left(1-\delta_{a_{1}b_{1}}\right)\sigma_{1}^{2}\right)^{\tfrac{1}{2}}.\label{u_a...}
\end{equation}
is nonisometric ultrametric $p_{1}p_{2}\cdots p_{N}\times p_{1}p_{2}\cdots p_{N}$
matrix.

\label{th7} \end{theorem}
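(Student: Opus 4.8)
The plan is to mimic the proof of Theorem~\ref{th6}, with the only essential change being a bookkeeping argument that handles the nested conditional-variance decomposition across $N$ levels rather than two. The heart of the matter is the computation of $\mathsf{E}\left[\left(x_{i}^{(a_{1}\cdots a_{N})}-x_{i}^{(b_{1}\cdots b_{N})}\right)^{2}\right]$; once this is shown to equal the square of the right-hand side of (\ref{u_a...}), the conclusion follows verbatim from the law of large numbers (Theorem~\ref{th1}) applied to the i.i.d.\ sequence $\left(x_{i}^{(a_{1}\cdots a_{N})}-x_{i}^{(b_{1}\cdots b_{N})}\right)^{2}$, $i=1,2,\ldots,n$, together with Slutsky's theorem (Theorem~\ref{th2}) applied to the continuous map $t\mapsto\sqrt{t}$.

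First I would fix the two multi-indices $a=a_{1}\cdots a_{N}$ and $b=b_{1}\cdots b_{N}$ and let $k^{*}$ be the largest level $k$ such that $a_{1}=b_{1},\ldots,a_{k}=b_{k}$ (with $k^{*}=0$ if $a_{1}\neq b_{1}$, and $k^{*}=N$ if $a=b$, in which case the distance is identically $0$ and there is nothing to prove). The key structural observation is the telescoping identity for a random variable $\xi=x_{i}^{(a_{1}\cdots a_{N})}$ with respect to the filtration $\Sigma^{(1)}\subset\cdots\subset\Sigma^{(N)}=\Sigma$: writing $\xi^{(k)}=\mathsf{E}\left[\xi\left|\Sigma^{(k)}\right.\right]=x_{i}^{(a_{1}\cdots a_{k})}$ and $\xi^{(0)}=m_{i}$, one has $\xi-m_{i}=\sum_{k=1}^{N}\left(\xi^{(k)}-\xi^{(k-1)}\right)$, and the martingale increments $\xi^{(k)}-\xi^{(k-1)}$ are orthogonal in $L^{2}$, so that $\mathsf{V}[\xi]=\sum_{k=1}^{N}\mathsf{E}\left[\left(\xi^{(k)}-\xi^{(k-1)}\right)^{2}\right]$. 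By the hypotheses $\mathsf{V}[x_{i}^{(a_{1})}]=\sigma_{1}^{2}$ and $\mathsf{E}\left[\mathsf{V}\left[x_{i}^{(a_{1}\cdots a_{k})}\left|\Sigma^{(k)}\right.\right]\right]=\sigma_{k+1}^{2}$, the total-variance decomposition gives $\mathsf{E}\left[\left(\xi^{(k)}-\xi^{(k-1)}\right)^{2}\right]=\sigma_{k}^{2}$ for each $k$.

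Next I would expand $\left(x_{i}^{(a)}-x_{i}^{(b)}\right)^{2}$ using the two telescoping sums. For levels $k\le k^{*}$ the increments $\xi_{a}^{(k)}-\xi_{a}^{(k-1)}$ and $\xi_{b}^{(k)}-\xi_{b}^{(k-1)}$ coincide (they depend only on $a_{1}\cdots a_{k}=b_{1}\cdots b_{k}$), so they cancel in the difference $x_{i}^{(a)}-x_{i}^{(b)}$. For levels $k>k^{*}$ they are built from coordinates that, conditional on $\Sigma^{(k-1)}$, are independent with zero conditional mean, so all cross terms — both between the $a$-chain and the $b$-chain, and between increments at different levels — vanish in expectation. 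What survives is
\[
\mathsf{E}\left[\left(x_{i}^{(a)}-x_{i}^{(b)}\right)^{2}\right]=\sum_{k=k^{*}+1}^{N}\left(\sigma_{k}^{2}+\sigma_{k}^{2}\right)=2\sum_{k=k^{*}+1}^{N}\sigma_{k}^{2},
\]
and since $1-\delta_{a_{1}b_{1}}\cdots\delta_{a_{\ell}b_{\ell}}$ equals $1$ precisely when $\ell>k^{*}$ and $0$ otherwise, this is exactly the expression inside the square root in (\ref{u_a...}). Applying Theorem~\ref{th1} to the average of these i.i.d.\ summands and then Theorem~\ref{th2} with $h(t)=\sqrt{t}$ yields the claimed convergence in probability.

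Finally, to confirm that the limiting matrix $u$ is an ultrametric, I would note that $u_{ab}$ depends on the pair only through $k^{*}(a,b)$, and that $u$ is a strictly decreasing function of $k^{*}$ (since each $\sigma_{k}^{2}\ge 0$, and assuming the $\sigma_{k}^{2}$ are not all zero at the relevant levels — otherwise points collapse). For any triple $a,b,c$, among the three pairwise ``agreement depths'' $k^{*}(a,b),k^{*}(b,c),k^{*}(a,c)$ the two smallest are equal (this is the combinatorial ultrametric property of the tree of prefixes), whence the two largest among $u_{ab},u_{bc},u_{ac}$ are equal, giving the strong triangle inequality (\ref{um}); nonnegativity and vanishing on the diagonal are immediate. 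Non-isometricity follows because $u$ takes at least two distinct nonzero values whenever $N\ge2$ and the relevant $\sigma_{k}^{2}$ differ.

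The main obstacle is the orthogonality bookkeeping in the expansion of $\left(x_{i}^{(a)}-x_{i}^{(b)}\right)^{2}$: one must be careful that the hypothesis ``$\mathsf{E}\left[x_{i}^{(a_{1}\cdots a_{N})}\left|\Sigma^{(k)}\right.\right]$ is identical for all $a_{k+1},\ldots,a_{N}$'' is exactly what forces the low-level increments to cancel between $a$ and $b$ when $a_{1}\cdots a_{k^{*}}=b_{1}\cdots b_{k^{*}}$, while the conditional-independence-with-zero-conditional-mean of the high-level increments is what kills every cross term — this is the step where an imprecise reading of the hypotheses would produce spurious cross terms and the wrong limit. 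Everything else is a routine reprise of Theorems~\ref{th1}, \ref{th2}, and \ref{th3}.
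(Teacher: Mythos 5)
Your proposal is correct and takes essentially the same approach as the paper: the central identity $\mathsf{E}\bigl[(x_{i}^{(a)}-x_{i}^{(b)})^{2}\bigr]=2\sum_{k>k^{*}}\sigma_{k}^{2}$ that you obtain from the orthogonal martingale-increment expansion is exactly what the paper derives via its level-by-level recursive identity (your telescoping sum is the unrolled form of that recursion), after which both arguments conclude identically with the law of large numbers and Slutsky's theorem applied to $t\mapsto\sqrt{t}$. Your explicit verification of the strong triangle inequality via the prefix-agreement depth is a welcome addition that the paper leaves implicit.
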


The proof \theor{th7} is analogous to the proof of \theor{th6}
and is based on the identity

\[
\mathsf{E}\left[\left(x_{i}^{(a_{1}a_{2}\cdots a_{k})}-x_{i}^{(b_{1}b_{2}\cdots b_{k})}\right)^{2}\right]=
\]

\[
=\left(1-\delta_{a_{1}b_{1}}\delta_{a_{2}b_{2}}\cdots\delta_{a_{k}b_{k}}\right)\mathsf{E}\left[\mathsf{V}\left[x_{i}^{(a_{1}a_{2}\cdots a_{k})}\left|\Sigma^{(k-1)}\right.\right]+\mathsf{V}\left[x_{i}^{(b_{1}b_{2}\cdots b_{k})}\left|\Sigma^{(k-1)}\right.\right]\right]+
\]

\[
+\left(1-\delta_{a_{1}b_{1}}\delta_{a_{2}b_{2}}\cdots\delta_{a_{k-1}b_{k-1}}\right)\mathsf{E}\left[\left(x_{i}^{(a_{1}a_{2}\cdots a_{k-1})}-x_{i}^{(b_{1}b_{2}\cdots b_{k-1})}\right)^{2}\right].
\]
for $k=1,\ldots,N$.

\section{The algorithm of the stochastic generation of ultrametric structures
in high-dimensional spaces and the numerical simulation}

In this section, we describe the procedure for constructing an ultrametric
space which is induced by the special random distribution of points
in the space $R^{n}$ for high $n$. We restrict ourselves to finite
homogeneous ultrametric spaces. Such spaces are equivalent to the
boundary of some $N$-level hierarchical tree with fixed numbers of
branching $p_{k}$ for every $k$-th level. For $p_{1}=p_{2}=\cdots=p_{N}=p$
the homogeneous ultrametric space is self-similar, and the corresponding
hierarchical tree is the $N$-level Cayley tree, with the number of
branching equal to $p$. However, following our procedure, one can
construct a non--homogeneous finite ultrametric space isomorphic to
the boundary of $N$-level hierarchical tree with arbitrary number
of branches at each node. The procedure of building such an ultrametric
space in accordance with \theor{th7} can be described as follows.
We generate $p_{1}$ independent random points $x^{(a_{1})}=\left(x_{1}^{(a_{1})},x_{2}^{(a_{1})},\ldots,x_{n}^{(a_{1})}\right)$
($a_{1}=1,2,\ldots,p_{1}$ ) in the $n$-dimensional space $R^{n}$
with the normal distribution $\mathcal{N}\left(0,\sigma_{1}\right)$
for each coordinate. Next we generate $p_{1}p_{2}$ independent random
points $x^{(a_{1}a_{2})}=\left(x_{1}^{(a_{1}a_{2})},x_{2}^{(a_{1}a_{2})},\ldots,x_{n}^{(a_{1}a_{2})}\right)$
($a_{1}=1,2,\ldots,p_{1}$, $a_{2}=1,2,\ldots,p_{1}p_{2}$) in $R^{n}$
with normal distribution $\mathcal{N}\left(x_{i}^{(a_{1})},\sigma_{2}\right)$
for $i$-th coordinate. Next we generate $p_{1}p_{2}p_{3}$ independent
random points $\left(x_{1}^{(a_{1}a_{2}a_{3})},x_{2}^{(a_{1}a_{2}a_{3})},\right.$$\left.\ldots,x_{n}^{(a_{1}a_{2}a_{3})}\right)$
($a_{1}=1,2,\ldots,p_{1}$, $a_{2}=1,2,\ldots,p_{1}p_{2}$, $a_{3}=1,2,\ldots,p_{1}p_{2}p_{3}$)
in $R^{n}$ with the normal distribution $\mathcal{N}\left(x_{i}^{(a_{1}a_{2})},\sigma_{3}\right)$
for $i$-th coordinate and so on. We repeat this procedure for the
generation of random points $N$ times. On the last step we generate
$p_{1}p_{2}\cdots p_{N}$ independent random points $x^{(a_{1}a_{2}\cdots a_{N})}=$
$\left(x_{1}^{(a_{1}a_{2}\cdots a_{N})},x_{2}^{(a_{1}a_{2}\cdots a_{N})},\right.$
$\left.\ldots,x_{n}^{(a_{1}a_{2}\cdots a_{N})}\right)$ ($a_{1}=1,2,\ldots,p_{1}$,
$a_{2}=1,2,\ldots,p_{1}p_{2}$, $\ldots,$ $a_{N}=1,2,\ldots,p_{1}p_{2}\cdots p_{N}$)
in $R^{n}$ with the normal distribution $\mathcal{N}\left(x_{i}^{(a_{1}a_{2}\cdots a_{N-1})},\sigma_{N}\right)$
for $i$-th coordinate. The set of points $M_{n}^{(N)}=\left\{ x^{(a_{1}a_{2}\cdots a_{N})}\right\} $
forms the metric space with the metric (\ref{d_gen}). The metric
(\ref{d_gen}) will converge in probability for $n\to\infty$ to some
ultrametric in the sense of \theor{th7}.

We introduce some new definitions.

Let $M$ be the finite metric space with elements $x^{(a)}\in R^{n},\: a=1,2,\ldots,N$
and let $d\left(x^{(a)},x^{(b)}\right)$ be the metric on $M$. For
for any triplet of points (or for any triangle) $x^{a},\, x^{b},\, x^{c}$
in $M$ we define two functions:
\[
I(a,b,c)=1-\dfrac{\mathrm{mid}\left\{ d(x^{(a)},x^{(b)}),d(x^{(b)},x^{(c)}),d(x^{(c)},x^{(a)}))\right\} }{\max\left\{ d(x^{(a)},x^{(b)}),d(x^{(b)},x^{(c)}),d(x^{(c)},x^{(a)}))\right\} },
\]
and
\[
J(a,b,c)=1-\dfrac{\min\left\{ d(x^{(a)},x^{(b)}),d(x^{(b)},x^{(c)}),d(x^{(c)},x^{(a)}))\right\} }{\max\left\{ d(x^{(a)},x^{(b)}),d(x^{(b)},x^{(c)}),d(x^{(c)},x^{(a)}))\right\} },
\]
where
\[
\mathrm{mid}\left\{ d(x^{(a)},x^{(b)}),d(x^{(b)},x^{(c)}),d(x^{(c)},x^{(a)}))\right\} \equiv\left(d(x^{(a)},x^{(b)})+d(x^{(b)},x^{(c)})+d(x^{(c)},x^{(a)})-\right.
\]
\[
\left.-\max\left\{ d(x^{(a)},x^{(b)}),d(x^{(b)},x^{(c)}),d(x^{(c)},x^{(a)}))\right\} -\min\left\{ d(x^{(a)},x^{(b)}),d(x^{(b)},x^{(c)}),d(x^{(c)},x^{(a)}))\right\} \right),
\]
and we call this the middle length of the of the triangle $\left\{ x^{(a)},x^{(b)},x^{(c)}\right\} $.
We say that $I(a,b,c)$ is the \emph{ultrametricity index} and $I(a,b,c)$
is the \emph{isometricity index} of the triangle $\left\{ x^{(a)},x^{(b)},x^{(c)}\right\} $.

\emph{The ultrametricity index of the metric space $M$} is the number
$U\left(M\right)$ defined as the average ultrametricity index of
all possible triangles in $M$:
\[
U\left(M\right)=\frac{3!(N-3)!}{N!}\sum_{a=1}^{N}\sum_{b=a+1}^{N}{\rm \sum_{c=b+1}^{N}}I(a,b,c).
\]

\emph{The isometricity index of the metric space $M$} is the number
$E\left(M\right)$ defined as the average isometricity index of all
possible triangles in $M$:
\[
E\left(M\right)=\frac{3!(N-3)!}{N!}\sum_{a=1}^{N}\sum_{b=a+1}^{N}{\rm \sum_{c=b+1}^{N}}I(a,b,c).
\]

Note that $0\leq U\left(M\right)\leq\dfrac{1}{2}$, $0\leq E\left(M\right)\leq1$
and we have $U\left(M\right)=0$ if $M$ is ultrametric and $E\left(M\right)=0$
if $M$ is isometric. It is possible to claim that for metric space
$M_{n}^{(N)}$ generated by the procedure described in the previous
section
\[
U\left(M_{n}^{(N)}\right)\overset{\mathrm{P}}{\rightarrow}0
\]
as $n\to\infty$.

Let $d_{n}^{\left(p_{1},p_{2}, \dots, p_{N}\right)}\left(N\right)$ be the metric matrix generated
numerically according to the above procedure, where $N$ is the number
of procedure steps (the level number of the ultrametric tree), $p_{i}$,
$i=1,2, \dots , N$ is the numbers of points at the $i$-th step of the procedure. Below we present the metric matrix $d_{n}^{\left(p_{1},p_{2}\right)}\left(2\right)$
and $d_{n}^{\left(p_{1},p_{2},p_{3}\right)}\left(3\right)$.
The values of the variances are $\sigma_{1}^{2}=\sigma_{2}^{2}=\sigma_{3}^{2}=1$.
The dimension $n$ of the space $R^{n}$ is chosen to be $10^{4}$.

\[
d^{\left(3,3\right)}\left(2\right)=\left(\begin{array}{ccccccccc}
0 & 1.4055 & 1.4199 & 2.0049 & 1.9981 & 1.9989 & 1.9968 & 1.9958 & 2.0010\\
1.4055 & 0 & 1.4116 & 2.0141 & 1.9973 & 1.9984 & 2.0133 & 2.0173 & 2.0211\\
1.4199 & 1.4116 & 0 & 2.0148 & 2.0082 & 2.0022 & 2.0240 & 2.0245 & 2.0275\\
2.0049 & 2.0141 & 2.0148 & 0 & 1.4166 & 1.4163 & 1.9921 & 1.9770 & 1.9958\\
1.9981 & 1.9973 & 2.0082 & 1.4166 & 0 & 1.4137 & 1.9955 & 1.9794 & 1.9921\\
1.9989 & 1.9984 & 2.0022 & 1.4163 & 1.4137 & 0 & 1.9894 & 1.9592 & 1.9809\\
1.9968 & 2.0133 & 2.0240 & 1.9921 & 1.9955 & 1.9894 & 0 & 1.4124 & 1.4019\\
1.9958 & 2.0173 & 2.0245 & 1.9770 & 1.9794 & 1.9592 & 1.4124 & 0 & 1.4027\\
2.0010 & 2.0211 & 2.0275 & 1.9958 & 1.9921 & 1.9809 & 1.4019 & 1.4027 & 0
\end{array}\right)
\]

\[
d^{\left(2,2,2\right)}\left(3\right)=\left(\begin{array}{cccccccc}
0 & 1.4168 & 2.0359 & 2.0355 & 2.4578 & 2.4540 & 2.4434 & 2.4260\\
1.4168 & 0 & 2.0113 & 2.0119 & 2.4398 & 2.4402 & 2.4433 & 2.4147\\
2.0359 & 2.0113 & 0 & 1.4238 & 2.4573 & 2.4580 & 2.4466 & 2.4422\\
2.0355 & 2.0119 & 1.4238 & 0 & 2.4572 & 2.4541 & 2.4472 & 2.4467\\
2.4578 & 2.4398 & 2.4573 & 2.4572 & 0 & 1.4294 & 2.0108 & 2.4572\\
2.4540 & 2.4402 & 2.4580 & 2.4541 & 1.4294 & 0 & 2.0146 & 2.0033\\
2.4434 & 2.4433 & 2.4466 & 2.4472 & 2.0108 & 2.0146 & 0 & 1.4142\\
2.4260 & 2.4147 & 2.4422 & 2.4467 & 2.0005 & 2.0033 & 1.4142 & 0
\end{array}\right)
\]

In full agreement with \theor{th6} and \theor{th7} the matrix
$d^{\left(3,3\right)}\left(2\right)$ and $d^{\left(2,2,2\right)}\left(3\right)$
coincides with the matrix (\ref{u_aabb}) for $\sigma_{1}=\sigma_{2}=1$
\[
u_{a_{1}a_{2},b_{1}b_{2}}=\sqrt{2}\left(\left(1-\delta_{a_{1}b_{1}}\delta_{a_{2}b_{2}}\right)+\left(1-\delta_{a_{1}b_{1}}\right)\right)^{\tfrac{1}{2}}
\]
and (\ref{u_a...}) for $N=3$ and $\sigma_{1}=\sigma_{2}=\sigma_{3}=1$
\[
u_{a_{1}a_{2}a_{3},b_{1}b_{2}b_{3}}=\sqrt{2}\left(\left(1-\delta_{a_{1}b_{1}}\delta_{a_{2}b_{2}}\delta_{a_{3}b_{3}}\right)+\left(1-\delta_{a_{1}b_{1}}\delta_{a_{2}b_{2}}\right)+\left(1-\delta_{a_{1}b_{1}}\right)\right)^{\tfrac{1}{2}}
\]
respectively up to the third digit in the values of its elements.

Below we present a plots of the ultrametricity index $U\left(M\right)$
and isometricity index $E\left(M\right)$ from the space dimension
$n$ for the three step procedure of stochastic generation of ultrametric
space (\fig{figure}). The values of parameters are $p_{1},=p_{2}=p_{3}=3$,
$\sigma_{1}=\sigma_{2}=\sigma_{3}=1$. As one can see from \fig{figure}
even at dimensions $n\gtrsim15$ the metric $27\times27$ matrix can
be satisfactorily considered as an ultrametric matrix.


\begin{figure}[H]
\begin{minipage}[c]{1\linewidth}%
\center{\includegraphics[width=1\linewidth]{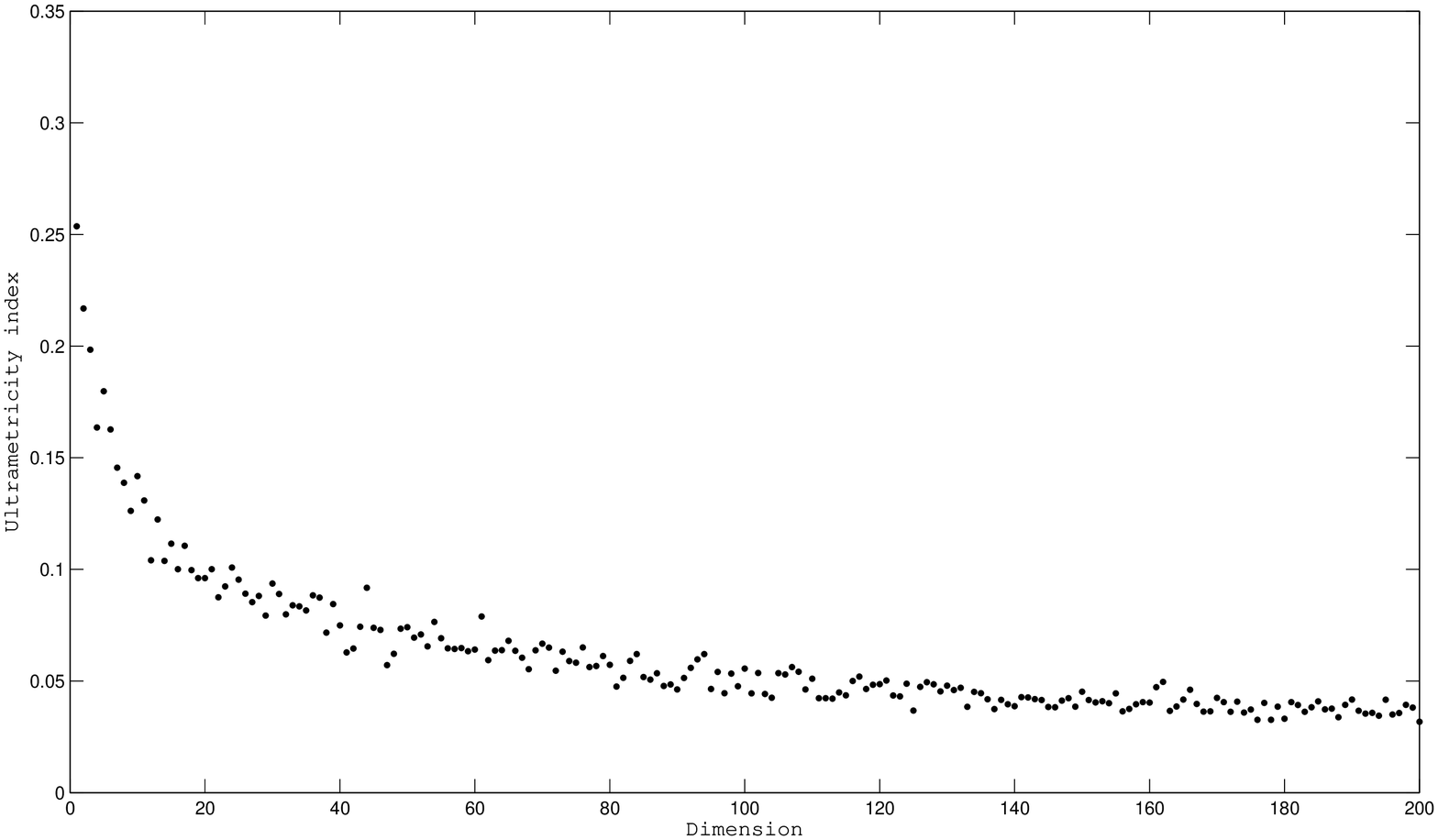}}\\
 (a) \label{figure1} %
\end{minipage}\vfill{}
\begin{minipage}[c]{1\linewidth}%
\center{\includegraphics[width=1\linewidth]{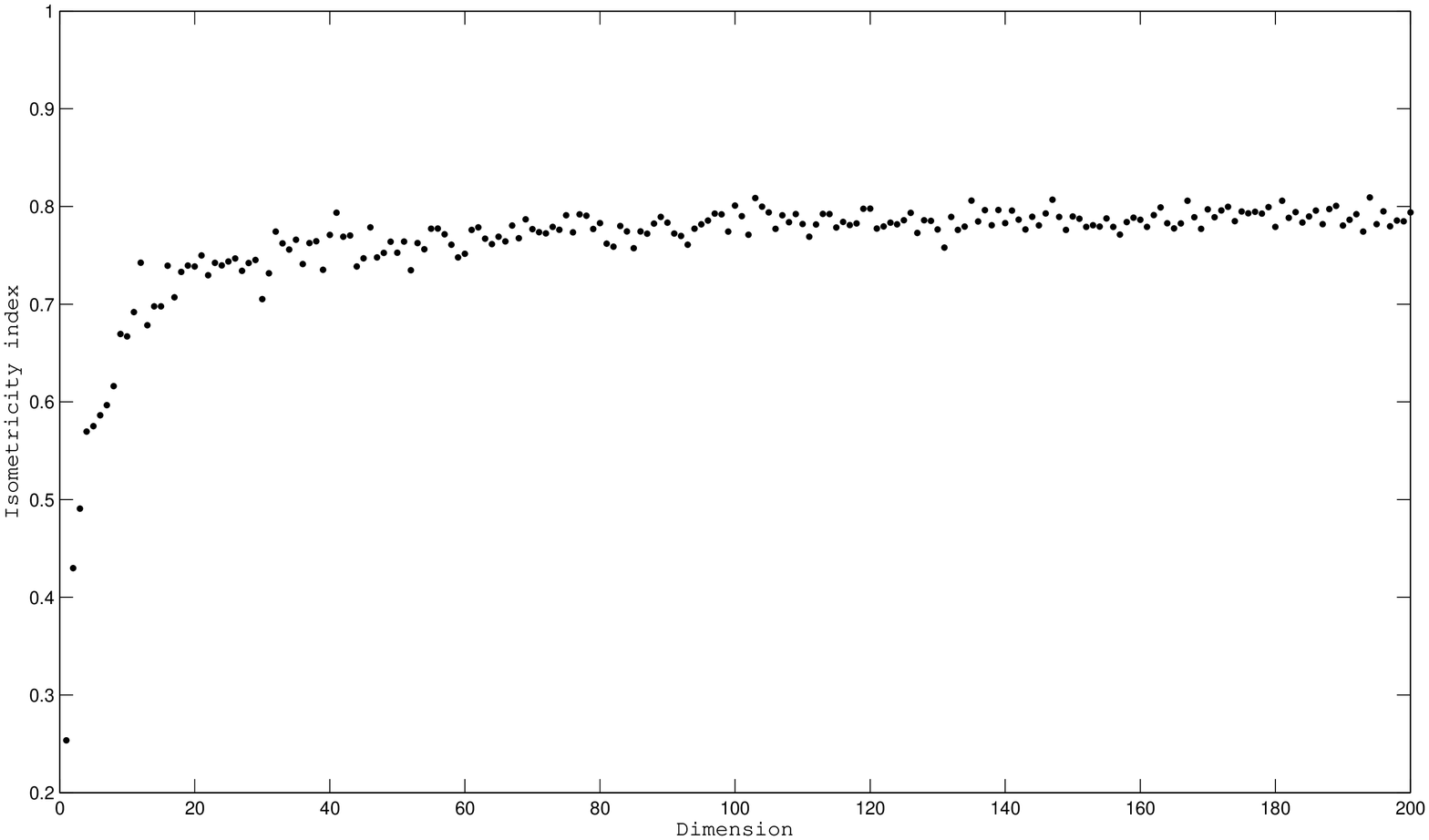}}\\
 (b) %
\end{minipage}\caption{ plot of the ultrametricity index $U\left(M\right)$ (a) and plot
of the isometricity index $E\left(M\right)$ (b) from the space dimension
$n$ for three step procedure of stochastic generation of ultrametric
space in $R^{n}$.}

\label{figure}
\end{figure}


\section{Conclusion}

In this paper a probabilistic mechanism for generating an ultrametrics
in Euclidean metric spaces of high dimension is described. It is based
on a special probability distribution of statistically independent
random points. It is proved that the Euclidean metric on a set of
independent random points in $R^{n}$ with a special distribution
convergence in probability to ultrametric as $n\to\infty$. Ultrametric
distance matrix is completely determined by variances of distributions
of coordinates of points. The probabilistic algorithm for the generation
of finite ultrametric structures of any topology in high-dimensional
Euclidean spaces is described. The validity of the algorithm is demonstrated
by the explicit calculations of distance matrices and ultrametricity
indexes for different dimensions.

The main result of this paper is \theor{th7}. Note that this theorem
like the previous ones since \theor{th4} can be generalized. For
example, there is no need to require that the expectations of conditional
variances $\mathsf{E}\left[\mathsf{V}\left[x_{i}^{(a_{1}a_{2}\cdots a_{N})}\left|\Sigma^{(k-1)}\right.\right]\right]=\sigma_{k,i}^{2}$
of random coordinates of points $x^{(a_{1}a_{2}\cdots a_{k})}$ are
independent on $i$. It is sufficient to require that the limit $\lim_{n\to\infty}\frac{1}{n}\sum_{i=1}^{n}(\sigma_{k,i})^{2}$
is finite. Furthermore one can consider the case when the coordinates
of random points are correlated in some way. We left the generalization
of these theorems for future.

We emphasize that the problem discussed in the present paper is not
about a mechanism of stochastic generation of hidden ultrametric structures
being realized in real high-dimensional complex systems, such as disordered
ferromagnets, spin glasses, etc. We have described only a possible
scenario of appearance of ultrametric for such systems. But even if
there is an implementation of a similar scenario, it is difficult
to track it because of limitation of analytical methods of modeling
and research of these systems. However it seems reasonable that this
mechanism could be realized in complex disordered systems consisting
of a large number of elements interacting in a random way. Therefore
the study of toy models of disordered systems with a special type
of random interactions allowing an accurate analytical study for the
stochastic generation of hidden ultrametric structures in the line
with the described mechanism is of particular interest. 
\vskip 0.5cm \textbf{\large{{{Acknowledgements}}}}{\large{{
\vskip 0.2cm The author is grateful to Vladik Avetisov for useful
discussions. Work was partially supported by the RFBR grant (project
number 13-01-00790-a). }}}{\large \par}

\end{document}